%
\documentclass[runningheads]{llncs}
\usepackage[T1]{fontenc}
%
\usepackage{graphicx}
\usepackage[a4paper, left=25mm, right=25mm, top=25mm, bottom=25mm]{geometry}
%

\usepackage[style=ext-numeric,citestyle=numeric-comp,sorting=nyt,sortcites, backend=biber, giveninits=true, maxnames=99]{biblatex}
\DeclareFieldFormat[article,periodical]{volume}{\mkbibbold{#1}}
\DeclareFieldFormat[article,periodical]{number}{\mkbibparens{#1}}
\renewbibmacro{in:}{%
  \ifentrytype{article}
    {}
    {\printtext{\bibstring{in}\intitlepunct}}}

\DeclareFieldFormat{issuedate}{#1}
\renewcommand{\jourvoldelim}{\addcomma\space}

\renewbibmacro*{journal+issuetitle}{%
  \usebibmacro{journal}%
  \setunit*{\jourvoldelim}%
  \iffieldundef{series}
    {}
    {\setunit*{\jourserdelim}%
     \printfield{series}%
     \setunit{\servoldelim}}%
  \usebibmacro{volume+number+eid}%
  \setunit{\addcolon\space}%
  \usebibmacro{issue}%
  \setunit{\bibpagespunct}%
  \printfield{pages}%
  \setunit{\space}%
  \usebibmacro{issue+date}%
  \newunit}

\renewbibmacro*{note+pages}{%
  \printfield{note}%
  \newunit}


\DeclareFieldFormat[article,periodical]{date}{\mkbibparens{#1}}

\renewbibmacro*{publisher+location+date}{%
  \printlist{publisher}%
  \iflistundef{location}
    {\setunit*{\addcomma\space}}
    {\setunit*{\addcolon\space}}%
  \printlist{location}%
  \setunit*{\addcomma\space}%
  \usebibmacro{date}%
  \newunit}

\AtEveryBibitem{\clearfield{month}}
\AtEveryBibitem{\clearfield{day}}

\AtEveryBibitem{
  \ifentrytype{online}{}{
    \ifentrytype{thesis}{}{
      \ifentrytype{unpublished}{}{ 
        \clearfield{url}
      }
    }
  }
  \clearfield{language}
}

\DeclareSourcemap{
  \maps[datatype=bibtex]{
    \map[overwrite=true]{
      \step[fieldset=language, null]
      \step[fieldset=address, null]
     \step[fieldset=pagetotal, null]
    }
  }
}

\DeclareSourcemap{
    \maps[datatype=bibtex]{
      \map{
        \step[fieldsource=eprint,final]
        \step[fieldset=url,null]
        \step[fieldset=doi,null]
      }  
    }
  }

\AtEveryBibitem{\clearfield{pubstate}} 
\AtEveryBibitem{\clearfield{eprintclass}} 
\AtEveryBibitem{\clearfield{version}} 

\addbibresource{biblio_biblatex.bib}

\usepackage[unicode=true, pdfusetitle, colorlinks=true, allcolors=blue, bookmarks=false]{hyperref}
\usepackage[hyphenbreaks]{breakurl}
\usepackage{xurl}
\hypersetup{breaklinks=true}
\usepackage{color}

\urlstyle{rm}

\usepackage{amssymb,amsfonts,stmaryrd,mathtools,mathrsfs}
\usepackage{tikz}
\usetikzlibrary{cd}
\usepackage{lipsum}


\newcommand*{\ZZ}{\mathbb{Z}}

\newcommand*{\RR}{\mathbb{R}}

\newcommand\restr[2]{{
  \left.\kern-\nulldelimiterspace 
  #1 
  \right|_{#2} 
}}

\newcommand{\T}{{\mathsf T}}
\newcommand{\cT}{\T^{\ast}}
\newcommand*{\dd}{\mathrm{d}}
\newcommand*{\contr}[1]{\iota_{#1}}
\newcommand*{\liedv}[1]{\mathcal{L}_{#1}}
\newcommand*{\Reeb}{\mathcal{R}}
\newcommand*{\X}{\mathfrak{X}}
\newcommand*{\lvf}{\nabla} 
\newcommand{\Cinfty}{\mathscr{C}^\infty}
\newcommand*{\parder}[2]{\frac{\partial#1}{\partial #2}}
\newcommand{\parderr}[3]{\frac{\partial^2 #1}{\partial #2\partial #3}}

\DeclareMathOperator{\Sec}{Sec}

\let\oldemph\emph

\mathtoolsset{showonlyrefs} 

\usepackage[colorinlistoftodos]{todonotes} 

\begin{document}
\title{\sffamily Homogeneous bi-Hamiltonian structures\\ \sffamily and integrable contact systems}
%
\titlerunning{Homogeneous bi-Hamiltonian structures and integrable contact systems}
%
\author{Leonardo Colombo\inst{1}\orcidID{0000-0001-6493-6113} \and \\
Manuel de León\inst{2,3}\orcidID{0000-0002-8028-2348} \and \\
María Emma Eyrea Irazú\inst{4}\orcidID{{0000-0001-5535-9959}} \and \\
Asier López-Gordón\inst{5}\orcidID{0000-0002-9620-9647} }
\authorrunning{L.~Colombo, M.~de León, M.~E.~Eyrea Irazú, and A.~López-Gordón}
%
\institute{Centre for  Automation and Robotics, Spanish Research Council, Arganda del Rey, Spain. 
\email{leonardo.colombo@car.upm-csic.es}
\and
Institute of Mathematical Sciences and Spanish Royal Academy of Sciences, Madrid, Spain. \\
\email{mdeleon@icmat.es}
\and
La Plata Mathematics Center, UNLP, Buenos Aires, Argentina. \\
\email{maeemma@mate.unlp.edu.ar}
\and
Institute of Mathematics, Polish Academy of Sciences, Warsaw, Poland. \\
\email{alopez-gordon@impan.pl}
}
\maketitle              
\begin{abstract}


Bi-Hamiltonian structures can be utilised to compute a maximal set of functions in involution for certain integrable systems, given by the eigenvalues of the recursion operator relating both Poisson structures. We show that the recursion operator relating two compatible Jacobi structures cannot produce a maximal set of functions in involution. However, as we illustrate with an example, bi-Hamiltonian structures can still be used to obtain a maximal set of functions in involution on a contact manifold, at the cost of symplectisation.
\keywords{Bi-Hamiltonian systems  \and Contact geometry \and Integrable systems.}
\end{abstract}
\section{Introduction}

A $2n$-dimensional Hamiltonian system $(M, \omega, H)$ (with $\omega$ a symplectic form and $H$ a function on $M$) is called a (completely) integrable system if there exist $n$ functions $f_1, \ldots, f_n$ on $M$ which are a) constants of the motion for the dynamics defined by $H$, b) in involution (i.e., the Poisson bracket $\{f_i, f_j\}$ of each pair vanishes), c) functionally independent (i.e., their differentials $\dd f_i$ are linearly independent) on a dense subset of $M$. The Liouville--Arnol'd theorem states that an integrable system is foliated by Lagrangian submanifolds which, if they are compact and connected, are diffeomorphic to $n$-dimensional tori; the flow of each Hamiltonian vector field $X_{f_i}$ preserves the leaves of the foliation; and, in a neighbourhood of each torus, there are action-angle coordinates $(s_i, \varphi^i)$ such that
\begin{itemize}
    \item [a)] the symplectic form reads $\omega=\dd \varphi^i\wedge \dd s_i$, 
    \item [b)] the leaves of the foliation are level sets $\{s_1= c_1, \ldots, s_n = c_n\}$ of action coordinates, 
    \item [c)] the functions $f_i$ depend solely on the coordinates $(s_i)$, and thereupon their Hamiltonian vector fields are linear combinations of $\partial_{\varphi^i}$ with constant coefficients on each leave\footnote{Unfortunately, statement c) is frequently found in the literature as if it only held for $X_H$, but for a reader who is well acquainted with the Liouville-Arnol'd theorem and its proof it should be clear that it holds for every $X_{f_i}$. Refer to \cite[pp.~89-96]{Audin2004} for more details.} --namely
    $$f_i = f_i (s^1, \ldots, s^n)\, , \quad X_{f_i} = \sum_{j=1}^n \parder{f_i}{s_j} \parder{}{\varphi^j}\, .$$
\end{itemize}

In order to apply the Liouville--Arnol'd theorem \cite{Arnold1978}, one first needs to know $n$ independent conserved quantities in involution. A way to do so is to obtain a second Poisson structure $\Lambda_1$ which is compatible with the one defined by the original symplectic structure $\omega$. In that case, one can construct a $(1, 1)$-tensor field $N$, the so-called recursion operator, whose eigenvalues are functions in involution with respect to both Poisson brackets \cite{F.M.P2000,M.C.F+1997,M.M1984}. The theory of compatible Jacobi structures was developed by Iglesias, Monterde, Marrero, Nunes da Costa, Padrón and Petalidou \cite{I.M2001,M.M.P1999,NunesdaCosta1998a,P.N2003}. However, to the best of our knowledge, it has not been applied for studying the integrability of dynamics as in the Poisson case.

We have initiated a program of extending the theory of integrable systems to the realm of contact geometry. A natural way of doing so is regarding contact Hamiltonian systems as homogeneous symplectic ones. In \cite{C.d.L+2023} (see also \cite[Chapter 12]{Lopez-Gordon2024}), we proved a Liouville--Arnol'd theorem for contact Hamiltonian systems. A next natural step is attempting to extend the notion of bi-Hamiltonian system to this context.
Naively, one would think that, given a $(2n+1)$-dimensional manifold $M$ endowed with a contact form $\eta$ and a second Jacobi structure $(\Lambda_1, E_1)$ compatible with the one defined by $\eta$, their recursion operator could be used, under some non-degeneracy conditions, to construct a set of $(n+1)$ functions in involution.
However, it turns out that it is not possible to do this in the Jacobi category. Fernandes \cite{Fernandes1994} showed that a completely integrable system is bi-Hamiltonian and the recursion operator has the maximum possible functionally independent real eigenvalues if, and only if, the graph of the Hamiltonian function is a hypersurface of translation with respect to the affine structure defined by the action coordinates. This condition cannot hold for a non-trivial homogeneous Hamiltonian system, which implies that the recursion operator arising from two compatible Jacobi structures cannot produce a maximal set of functionally independent functions in involution. 
Nevertheless, bi-Hamiltonian structures can still be used for studying integrable contact systems by means of the symplectisation. We illustrate this in detail with an example.

\section{Preliminaries}

\subsection{Poisson and Jacobi manifolds}
Given a bivector field $\Lambda\in\Sec(\bigwedge^2 \T M)$ and a vector field $E\in \X(M)$ on a manifold $M$, let 
\begin{equation}\label{eq:Jacobi_bracket}
   \Cinfty(M)\times \Cinfty(M)\ni (f,g)\mapsto \{f, g\} = \Lambda(\dd f, \dd g) + f E(g) - g E(f)\in \Cinfty(M)\, .
\end{equation}
Lichnerowicz \cite{Lichnerowicz1977a,Lichnerowicz1978} proved that $\{\cdot, \cdot\}$ is a Lie bracket if, and only if, $[\Lambda, \Lambda]_{\mathrm{SN}} = 2 E \wedge \Lambda$ and $[E, \Lambda]_{\mathrm{SN}} = 0$, where $[\cdot, \cdot]_{\mathrm{SN}}$ denotes the Schouten--Nijenhuis bracket.
If these conditions hold, then $\{\cdot, \cdot\}$ (resp.~$(\Lambda, E)$) is called a \emph{Jacobi bracket} (resp.~\emph{Jacobi structure}) on $M$. The Jacobi bracket satisfies the \emph{weak Leibniz rule}:
\begin{equation}\label{eq:weak_Leibniz_rule}
    \{f, gh\} = \{f, g\} h + \{f, h\} g + g h E(f)\, ,\quad \forall\, f, g, h \in \Cinfty(M)\, .
\end{equation} 
A Poisson structure (resp.~bracket) is a Jacobi structure (resp.~bracket) with $E\equiv 0$. A collection of functions $f_1, \ldots, f_k\in \Cinfty(M)$ are said to be \emph{in involution} if $\{f_i, f_j\}=0$ for each $i, j\in \{1, \ldots, k\}$.

\subsection{Integrable contact systems}

A \emph{(co-oriented) contact manifold} $(M, \eta)$ is a $(2n+1)$-dimensional manifold endowed with a contact form $\eta$, i.e., a one-form such that $\eta \wedge \dd \eta^n$ is a volume form on $M$. The \emph{Reeb vector field} $\Reeb$ is the unique vector field on $\ker \dd \eta$ such that $\eta(\Reeb) = 1$. 
Any contact manifold can be covered by charts of Darboux coordinates $(q^i, p_i, z)$, in which $\eta=\dd z - p_i \dd q^i$ and $\Reeb = \partial_z$. The \emph{contact Hamiltonian vector field} $X_f$ of $f\in \Cinfty(M)$ is the unique vector field satisfying $\eta(X_f) = -f$ and $\liedv{X_f} \eta = -\Reeb(f) \eta$. In Darboux coordinates, 
$$X_f = \frac{\partial f}{\partial p_i} \frac{\partial}{\partial q^i} - \left( \frac{\partial f}{\partial q^i} + p_i \frac{\partial f}{\partial z}\right) \frac{\partial}{\partial p_i} + \left(  p_i \frac{\partial f}{\partial p_i} - f\right)\frac{\partial}{\partial z}\, . $$
A contact form $\eta$ on $M$ defines a Jacobi bracket $\{\cdot, \cdot\}$ by
$$\{f, g\} = X_f(g) + g \Reeb(f)\, ,\quad \forall\, f,g\in \Cinfty(M)\, .$$

\begin{definition}\label{def:integrable_systems}
    A \emph{contact Hamiltonian system} $(M, \eta, h)$ is a co-oriented contact manifold $(M, \eta)$ with a fixed Hamiltonian function $h\in \Cinfty(M)$. A \emph{dissipated quantity} $f\in \Cinfty{M}$ is a function in involution with $h$, i.e., $\{f, h\}=0$, where $\{\cdot, \cdot\}$ is the Jacobi bracket defined by $\eta$. 
    A $(2n+1)$-dimensional contact Hamiltonian system $(M, \eta, h)$ will be called a \emph{completely integrable contact system} if there exist $n+1$ dissipated quantities $f_1, \ldots, f_{n+1}$ on $M$ in involution ($\{f_i, f_j\}=0\ \forall\, i,j$) with $\operatorname{rank} \langle\dd f_1, \ldots, \dd f_{n+1}\rangle \geq n$.
\end{definition}

\subsection{Exact symplectic manifolds}


An \emph{exact symplectic manifold} is a pair $(M, \theta)$ formed by a manifold $M$ and a \emph{symplectic potential} $\theta$ on $M$, i.e., a one-form $\theta\in \Omega^1(M)$ such that $\omega = - \dd \theta$ is a symplectic form on $M$. The \emph{Liouville vector field} $\lvf\in \X(M)$ is the unique vector field such that $\contr{\lvf} \omega = - \theta$. A tensor field $A$ on $M$ will be called \emph{$k$-homogeneous} (or \emph{homogeneous of degree $k$}), where $k\in \ZZ$, if $\liedv{\lvf} A = k A$. In particular, both the symplectic potential $\theta$ and the symplectic form $\omega= - \dd \theta$ are $1$-homogeneous. Conversely, given a symplectic manifold $(M, \omega)$ endowed with a vector field $\lvf\in \X(M)$ such that $\liedv{\lvf} \omega = \omega$, it is straightforward to check that $\theta = - \contr{\lvf} \omega$ is a symplectic potential. For each function $f\in \Cinfty(M)$, its Hamiltonian vector field is determined by $\contr{X_f} \omega = \dd f$.

\begin{definition}\label{def:integrable_systems_homogeneous}
    A \emph{homogeneous Hamiltonian system} $(M, \theta, H)$ consists of a $2n$-dimensional exact symplectic  manifold $(M, \theta)$ and a $1$-homogeneous Hamiltonian function $H$. It is called a \emph{homogeneous integrable system} if there exist $n$ first integrals $f_1, \ldots, f_n$ of $X_H$, which are functionally independent, in involution and homogeneous of degree $1$ on a dense open subset $M_{0}\subseteq M$. 
\end{definition}

\subsection{Symplectisation of contact manifolds}

In this subsection, we briefly recall some concepts and results regarding the symplectisation of co-oriented contact manifolds. Refer to \cite{B.G.G2017a,C.d.L+2023,G.G2022a,I.L.M+1997,L.M1987,Lopez-Gordon2024} for more details.

Let $(M, \eta)$ be a co-oriented contact manifold, and let $(M^\Sigma, \theta)$ be an exact symplectic manifold.
A (locally trivial) fiber bundle $\Sigma\colon M^\Sigma \to M$ is a symplectisation if, and only if, there exists a nowhere-vanishing function $\sigma\colon M^\Sigma \to \RR$ such that
\begin{equation}
    \sigma \left(\Sigma^\ast \eta\right) = \theta\, .
\end{equation}
The function $\sigma$ is called the \emph{conformal factor} of $\Sigma$.

    Given a symplectisation $\Sigma\colon M^\Sigma \to M$ with conformal factor $\sigma$, there is a bijection $f\mapsto f^\Sigma = -\sigma \left(\Sigma^\ast f\right)$ between functions on $M$ and homogeneous functions of degree 1 on $M^\Sigma$ such that
    the Poisson and Jacobi brackets are related by 
    \begin{equation}
        \left\{f^\Sigma, g^\Sigma \right\}_\theta = \left\{f, g \right\}^\Sigma_\eta\, .
    \end{equation}
    If $(M, \eta, H)$ is a contact Hamiltonian system, a function $f$ on $M$ is a dissipated quantity with respect to $(M, \eta, H)$ if, and only if, $f^\Sigma$ is a conserved quantity with respect to $(M^\Sigma, \theta, H^\Sigma)$. Moreover, $(M, \eta, H)$ is an integrable contact system if, and only if, $(M^\Sigma, \theta, H^\Sigma)$ is a homogeneous integrable system.


\begin{example}[Trivial symplectisation]
    Let $(M, \eta)$ be a co-oriented contact manifold. Denote the positive real numbers by $\RR_+$ and by $r$ their canonical global coordinate. The trivial bundle $\Sigma\colon M\times \RR_+ \to M$ with the natural projection is a symplectisation with conformal factor $\sigma=r$. If $(q^i, p_i, z)$ are Darboux coordinates for $\eta$, we have that $\Sigma(q^i, p_i, z, r) = (q^i, p_i, z)$ and $\theta = r\dd z - rp_i\dd q^i$.
\end{example}

\section{Homogeneous bi-Hamiltonian systems}



Two Poisson tensors $\Lambda$ and $\Lambda_1$ on a manifold $M$ are said to be \emph{compatible} if $\Lambda + \Lambda_1$ is also a Poisson tensor on $M$. In other words, 
if $[\Lambda_1, \Lambda_2]_{\mathrm{SN}} = 0$.
Let $\Omega^1(M)\ni \alpha \mapsto \sharp_{\Lambda}(\alpha) = \Lambda(\cdot, \alpha) \in \X(M)$ denote the $\Cinfty(M)$-module morphism defined by $\Lambda$, and analogously for $\Lambda_1$.  A vector field $X\in \X(M)$ is called \emph{bi-Hamiltonian} if it is a Hamiltonian vector field with respect to two compatible Poisson structures, namely,
\begin{equation}
    X = \sharp_{\Lambda}(\dd h) = \sharp_{\Lambda_1}(\dd h_1 )\, ,
\end{equation}
for two functions $h, h_1 \in \Cinfty(M)$. If $\sharp_\Lambda$ is an isomorphism, then we can define the $(1, 1)$-tensor field 
$$N = \sharp_{\Lambda_1} \circ \sharp_{\Lambda}^{-1}\, ,$$
called the \emph{recursion operator}.
The pair $(\Lambda, N)$ is called a \emph{Poisson--Nijenhuis structure} on $M$, and the triple $(M, \Lambda, N)$ is called a \emph{Poisson--Nijenhuis manifold}. The eigenvalues $\lambda_i$ of $N$ are functions in involution with respect to the Poisson brackets defined by $\Lambda$ and $\Lambda_1$ \cite{M.C.F+1997}.

Consider a completely integrable Hamiltonian system $(M, \omega, H)$, where $M$ is $2n$-dimensional, with action-angle coordinates $(s_i, \varphi^i)$ satisfying the following conditions:

\begin{minipage}{0.95\linewidth}
\begin{itemize}
    \item [(ND)] The Hessian matrix $\left(\parderr{H}{s_i}{s_j}\right)$ of the Hamiltonian with respect to the action variables is non-degenerate in a dense subset of $M$.
    \item [(BH)] The system is bi-Hamiltonian and the recursion operator $N$ has $n$ functionally independent real eigenvalues $\lambda_1, \ldots, \lambda_n$.
\end{itemize}
\end{minipage}

Condition (ND), also known as the Kolmogorov condition, admits a coordinate-independent formulation; see \cite[Section 2]{Roy2006}. We omit it here, as it requires introducing geometric structures that play no further role in this work.

Fernandes \cite[Theorem 2.5]{Fernandes1994} showed that, under such assumptions, the Hamiltonian function can be written as
\begin{equation}\label{eq:H_lambda_i}
    H(\lambda_1, \ldots, \lambda_n) = \sum_{i=1}^n H_i(\lambda_i)\, ,
\end{equation}
where each $H_i$ is a function that depends only on the corresponding $\lambda_i$, for $i\in \{1, \ldots, n\}$. Moreover, the fact that $\{\lambda_1, \ldots, \lambda_n\}$ are in involution implies that $\{H_1, \ldots, H_n\}$ are in involution.

In our case, given a homogeneous integrable system (i.e., the symplectisation of an integrable contact system) satisfying (ND), we would like to find a second compatible Poisson structure which would satisfy (BH) and be homogeneous of degree $-1$. The latter would allow to project it into a Jacobi structure compatible with the original Jacobi structure\footnote{More precisely, given a Poisson manifold $(M, \Lambda)$ such that $\Lambda$ is $-1$-homogeneous with respect to $\lvf$, a $1$-codimensional submanifold $N$ of $M$ which is transverse to $\lvf$ can be equipped with a Jacobi structure (see \cite[Section 2]{D.L.M1991}) For our purposes, we can consider the trivial Poissonization/symplectisation --namely, $M=N\times \RR_+$ and $\lvf = r\partial_r$ with $r$ the canonical coordinate of $r$ (see the next section).}. Unfortunately, this is not possible.

\begin{proposition}\label{proposition:no-go}
    Let $(M, \theta, H)$ be a homogeneous integrable system satisfying the assumption (ND). Denote by $\Lambda$ the Poisson structure defined by $\omega = - \dd \theta$, and by $\lvf$ the Liouville vector field corresponding to $\theta$.
    For any Poisson structure $\Lambda_1$ on $M$ compatible with $\Lambda$, the following statements cannot be true simultaneously:
    \begin{enumerate}
        \item $\Lambda_1$ is $(-1)$-homogeneous (i.e., $\liedv{\lvf} \Lambda_1 = - \Lambda_1$) 
        \item $\Lambda_1$ satisfies (BH).
    \end{enumerate}
\end{proposition}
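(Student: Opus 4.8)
The plan is to combine the homogeneity of the recursion operator with Fernandes' normal form for $H$ so as to force $H\equiv 0$, which is absurd under (ND). First I would record that $\Lambda$ is automatically $(-1)$-homogeneous: since $\liedv{\lvf}\omega=\omega$ and, up to sign, $\sharp_\Lambda^{-1}$ is the musical map $X\mapsto\contr{X}\omega$, a short Lie-derivative computation gives $\liedv{\lvf}\Lambda=-\Lambda$ (the bivector dual to a $1$-homogeneous symplectic form is $(-1)$-homogeneous). Assuming now statement~(1), both $\Lambda$ and $\Lambda_1$ are $(-1)$-homogeneous while $\sharp_\Lambda^{-1}$ raises the homogeneity degree by one, so the recursion operator $N=\sharp_{\Lambda_1}\circ\sharp_\Lambda^{-1}$ is homogeneous of degree $0$, i.e.\ $\liedv{\lvf}N=0$. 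Explicitly, for $X\in\X(M)$ one uses $\liedv{\lvf}(\contr{X}\omega)=\contr{[\lvf,X]}\omega+\contr{X}\omega$ together with $\liedv{\lvf}\Lambda_1=-\Lambda_1$ and checks that the two extra terms cancel, leaving $\liedv{\lvf}(NX)=N(\liedv{\lvf}X)$.

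Next I would bring in statement~(2). By (BH) the operator $N$ has, on a dense open subset, $n$ functionally independent real eigenvalues $\lambda_1,\dots,\lambda_n$, which are in particular pairwise distinct there. Since $\liedv{\lvf}N=0$, the flow of $\lvf$ preserves $N$ and hence permutes its eigenvalue functions; as these are distinct and depend continuously on the point, the flow must fix each one, so $\lvf(\lambda_i)=0$ for all $i$. The system is completely integrable and satisfies (ND) and (BH), so Fernandes' theorem \cite[Theorem~2.5]{Fernandes1994} applies and yields the normal form $H=\sum_{i=1}^n H_i(\lambda_i)$; differentiating along $\lvf$ then gives $\lvf(H)=\sum_i H_i'(\lambda_i)\,\lvf(\lambda_i)=0$. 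But $H$ is $1$-homogeneous, $\lvf(H)=H$, whence $H\equiv 0$ and the Hessian $\left(\parderr{H}{s_i}{s_j}\right)$ vanishes identically, contradicting (ND). Therefore (1) and (2) cannot hold simultaneously.

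I expect the only genuinely delicate point to be the identity $\liedv{\lvf}N=0$: it is a routine but slightly fiddly computation with the Lie derivative of the musical morphisms, and one must keep the homogeneity conventions (the degrees of $\omega$, of $\Lambda$ and of $\sharp_\Lambda^{-1}$) consistent throughout. The step deducing $\lvf(\lambda_i)=0$ from $\liedv{\lvf}N=0$ also needs the brief continuity argument above, but the distinctness of the eigenvalues on the dense open set makes it harmless; everything else follows at once from Fernandes' normal form.
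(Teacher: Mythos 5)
Your proposal is correct and follows essentially the same route as the paper's proof: $(-1)$-homogeneity of $\Lambda_1$ makes $N$ (and hence its eigenvalues) $0$-homogeneous, so Fernandes' normal form $H=\sum_i H_i(\lambda_i)$ forces $\lvf(H)=0$, contradicting $1$-homogeneity of $H$ together with (ND). The only difference is that you spell out two steps the paper leaves implicit (the degree count for $\sharp_\Lambda^{-1}$ and the deduction $\lvf(\lambda_i)=0$ from $\liedv{\lvf}N=0$), both of which are handled correctly.
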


\begin{proof}
    Suppose that $\Lambda_1$ is $(-1)$-homogeneous. Then, $N$ is $0$-homogeneous, which in turn implies that its eigenvalues $\lambda_i$ are $0$-homogeneous. If $N$ has $n$ functionally independent eigenvalues $\lambda_1, \ldots, \lambda_n$, we can write $H$ in the form \eqref{eq:H_lambda_i}, but $H$ is $1$-homogeneous, so
    \begin{equation}
        H \equiv \lvf(H) = \sum_{i=1}^n H_i'(\lambda_i) \lvf(\lambda_i) \equiv 0\, .
    \end{equation}
    On the other hand, the assumption (ND) requires the Hessian matrix of $H$ to be non-singular, leading to a contradiction.
    $\hfill\square$
\end{proof}

Nevertheless, if $N$ is $1$-homogeneous and satisfies (BH), then its eigenvalues are $n$ functionally independent and $1$-homogeneous functions in involution, so they will project into $n$ functions in involution with respect to the Jacobi bracket.

\subsection{Example}\label{Example:Poisson}
Let $M=\RR^2$, and consider its cotangent bundle $\cT M \cong \RR^4$ endowed with the canonical one-form $\theta_{\RR^2}$. In bundle coordinates $(x^i, p_i)$, it reads $\theta_M = p_i \dd x^i$. It defines the symplectic form  $\omega_M = - \dd \theta_M = \dd x^i \wedge \dd p_i$, and the Poisson structure 
$$\Lambda = \parder{}{x^1} \wedge \parder{}{p_1} + \parder{}{x^2} \wedge \parder{}{p_2}\, .$$
In this case, the Liouville vector field is $\lvf_{M} = p_i \partial_{p_i}$, the infinitesimal generator of homotheties on the fibers.
A Poisson structure compatible with $\Lambda$ is 
$$\Lambda_1 =  p_1 \parder{}{x^1} \wedge \parder{}{p_1} + p_2 x^2 \parder{}{x^2} \wedge \parder{}{p_2}\, .$$
The Nijenhuis tensor $N = \sharp_{\Lambda_1} \circ \sharp_{\Lambda}^{-1}$ reads
$$ N =  p_1 \left( \parder{}{x^1} \otimes \dd x^1 +  \parder{}{p_1} \otimes \dd p_1 \right) + p_2 x^2 \left( \parder{}{x^2} \otimes \dd x^2 +  \parder{}{p_2} \otimes \dd p_2 \right)\, .$$
The eigenvalues of $N$ are $\lambda_1 = p_1$ and $\lambda_2 = p_2 x^2$, which are homogeneous of degree $1$, in involution with respect to both $\Lambda$ and $\Lambda_1$, and functionally independent on the dense subset $U = \cT M \setminus \Big(\{p_2 = 0\} \cap \{x^2=0\}\Big)$.
The vector field
$$X = \parder{}{x^1} + x^2 \parder{}{x^2} - p_2 \parder{}{p_2}$$ 
is bi-Hamiltonian. Indeed, it is the Hamiltonian vector field of $H=p_1+p_2x^2$ with respect to $\Lambda$, and the Hamiltonian vector field of $H_1=\log(p_1p_2 x^2)$ with respect to $\Lambda_1$. Moreover, $\lambda_1$ and $\lambda_2$ are first integrals of $X$.

Since $\lambda_1$ and $\lambda_2$ are functionally independent in $U$, they can be used as coordinates. We can take the $0$-homogeneous functions $\varphi^1 = x^1$ and $\varphi^2 = \log x^2$, so that $\theta = \lambda_i \dd \varphi^i$.
The coordinates $\lambda_1$ and $\lambda_2$ are the action coordinates, while $\varphi^1$ and $\varphi^2$ are angle coordinates. In these coordinates,
$$\Lambda = \sum_{i=1}^2 \parder{}{\varphi^i} \wedge \parder{}{\lambda^i}\, ,\quad \Lambda_1 = \sum_{i=1}^2 \lambda_i \parder{}{\varphi^i} \wedge \parder{}{\lambda^i}\, ,\quad  X = \sum_{i=1}^2 \parder{}{\varphi^i}\, ,$$
and $H=\lambda_1 + \lambda_2$.



\section{Integrable contact systems}

Given a Jacobi structure $(\Lambda, E)$ on $M$, one can construct an associated Poisson structure $\tilde{\Lambda} = \frac{1}{r} \Lambda + \partial_r \wedge E$ on $M\times \RR_+$, which by construction is homogeneous of degree $-1$ with respect to $\lvf = r \partial_r$.
Two Jacobi structures $(\Lambda, E)$ and $(\Lambda_1, E_1)$ on a manifold $M$ are called compatible if $(\Lambda+ \Lambda_1, E+E_1)$ is also a Jacobi structure on $M$. Nunes da Costa \cite[Proposition 1.7]{NunesdaCosta1998a} showed that $(\Lambda, E)$ and $(\Lambda_1, E_1)$ are compatible Jacobi structures if, and only if, $\tilde{\Lambda}$ and $\tilde{\Lambda}_1$ are compatible Poisson structures.

A consequence of Proposition~\ref{proposition:no-go} is the following:

\begin{corollary}\label{corollary:main}
    Let $(M, \eta, H)$ be a $(2n+1$)-dimensional integrable contact system satisfying the assumption (ND). For any Jacobi structure $(\Lambda_1, E_1)$ compatible with the Jacobi structure $(\Lambda, E)$ defined by $\eta$, 
    the recursion operator $N=\sharp_{\tilde{\Lambda}_1} \circ\, \sharp_{\tilde{\Lambda}}^{-1}$ relating the associated Poisson structures on $M\times \RR_+$ cannot have $(n+1)$ functionally independent real eigenvalues.
\end{corollary}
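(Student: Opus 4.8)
The genuine content lies in Proposition~\ref{proposition:no-go}, so the plan is simply to transport it across the trivial symplectisation. First I would fix $\Sigma\colon M^\Sigma = M\times\RR_+ \to M$ with conformal factor $\sigma = r$, so that $\theta = r\,\Sigma^\ast\eta$ is a symplectic potential on the $2(n+1)$-dimensional exact symplectic manifold $M^\Sigma$ and the Liouville vector field of $\theta$ is $\lvf = r\partial_r$. By the facts on symplectisation recalled above, $(M^\Sigma, \theta, H^\Sigma)$ is then a homogeneous integrable system, and the hypothesis (ND) imposed on $(M,\eta,H)$---read for a contact system through this correspondence---is exactly (ND) for $(M^\Sigma, \theta, H^\Sigma)$. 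I would also record the standard fact that the Poisson bivector $\tilde{\Lambda} = \frac{1}{r}\Lambda + \partial_r\wedge E$ attached to the contact Jacobi structure $(\Lambda, E)$ of $\eta$ is, up to the usual sign convention, the Poisson structure determined by $\omega = -\dd\theta$ (a short computation in Darboux coordinates, equivalent to the fact that the Poissonisation of a contact manifold coincides with its symplectisation), so that $\sharp_{\tilde{\Lambda}}$ is invertible; and that $\tilde{\Lambda}$ is $(-1)$-homogeneous with respect to $\lvf$ because $\omega$ is $1$-homogeneous.

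Next I would take an arbitrary Jacobi structure $(\Lambda_1, E_1)$ on $M$ compatible with $(\Lambda, E)$ and form the associated Poisson bivector $\tilde{\Lambda}_1 = \frac{1}{r}\Lambda_1 + \partial_r\wedge E_1$ on $M^\Sigma$. By the result of Nunes da Costa recalled above, $\tilde{\Lambda}_1$ is compatible with $\tilde{\Lambda}$; moreover $\tilde{\Lambda}_1$ is $(-1)$-homogeneous with respect to $\lvf = r\partial_r$, as one sees from $\liedv{r\partial_r}(1/r) = -1/r$ and $[r\partial_r,\partial_r] = -\partial_r$. Hence $\tilde{\Lambda}_1$ fulfils statement~1 of Proposition~\ref{proposition:no-go} applied to the homogeneous integrable system $(M^\Sigma, \theta, H^\Sigma)$, which has dimension $2(n+1)$; the proposition then rules out statement~2, namely that $\tilde{\Lambda}_1$ satisfy (BH). But for a $2(n+1)$-dimensional system, (BH) amounts exactly to $N = \sharp_{\tilde{\Lambda}_1} \circ\, \sharp_{\tilde{\Lambda}}^{-1}$ having $n+1$ functionally independent real eigenvalues---the maximal number possible on a $2(n+1)$-dimensional symplectic manifold. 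Therefore $N$ cannot have $n+1$ such eigenvalues, which is the assertion.

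Since Proposition~\ref{proposition:no-go} does the work, the remaining steps are bookkeeping, and the ones needing care---the likeliest source of a gap---are the two bridges between the contact and the symplectic pictures: (i) identifying $\tilde{\Lambda}$ with the Poisson structure of $\omega = -\dd\theta$ and $\lvf = r\partial_r$ with its Liouville vector field, so that Proposition~\ref{proposition:no-go} genuinely applies to the pair $(\tilde{\Lambda},\lvf)$; and (ii) making precise that the non-degeneracy condition imposed on the integrable contact system is the one that transfers to $(M^\Sigma, \theta, H^\Sigma)$. For completeness I would also recall the mechanism: as $\tilde{\Lambda}$ and $\tilde{\Lambda}_1$ are both $(-1)$-homogeneous, $N$ is $0$-homogeneous and so are its eigenvalues $\lambda_i$; if there were $n+1$ functionally independent ones---which is the only part of (BH) the argument actually uses---Fernandes' theorem would put $H^\Sigma$ in the form \eqref{eq:H_lambda_i} (with $n$ replaced by $n+1$), whence $H^\Sigma = \lvf(H^\Sigma) = \sum_i H_i'(\lambda_i)\,\lvf(\lambda_i) = 0$, contradicting (ND). What makes the obstruction unavoidable here---in contrast with the remark following Proposition~\ref{proposition:no-go}, where $N$ is $1$-homogeneous---is that a compatible \emph{Jacobi} structure forces $\tilde{\Lambda}_1$ to be $(-1)$-homogeneous, hence $N$ to be $0$-homogeneous rather than $1$-homogeneous.
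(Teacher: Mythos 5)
Your proposal is correct and follows essentially the same route as the paper: pass to the trivial symplectisation, observe that the Poissonisation $\tilde{\Lambda}_1=\frac{1}{r}\Lambda_1+\partial_r\wedge E_1$ of any compatible Jacobi structure is automatically $(-1)$-homogeneous with respect to $\lvf=r\partial_r$ and compatible with $\tilde{\Lambda}$ (Nunes da Costa), and then invoke Proposition~\ref{proposition:no-go} for the $2(n+1)$-dimensional homogeneous integrable system $(M\times\RR_+,\theta,H^\Sigma)$. The paper leaves these bookkeeping steps implicit; your version spells them out, including the homogeneity computation, and is consistent with the intended argument.
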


Consequently, compatible Jacobi structures cannot be utilised to construct a set of independent dissipated quantities in involution for a contact Hamiltonian system. Nevertheless, we can symplectise the contact Hamiltonian system and obtain a second Poisson structure compatible with the one defined by the exact symplectic structure. 

\subsection{Example}\label{Example:contact}
Consider the contact Hamiltonian system $(M=\RR^3, \eta, h)$, with $\eta$ the canonical contact form, $\eta = \dd z - p \dd q$, and $h = p - z$. The contact Hamiltonian vector field of $h$ is $X_h=\partial_q + p \partial_p + z\partial_z$. The symplectisation of $(M, \eta, h)$ is $(\RR^4, \theta, H)$, with
$$ \theta = r \dd z - rp \dd q\, , \quad H = rz - rp\, ,$$
and Liouville vector field $\lvf = r\partial_r$, in canonical bundle coordinates $(q, p, z, r)$. If we make the change of coordinates
$x^1 = q, \, x^2 = z, \, p_1 = -rp, \,p_2 = r$,
we have that
$$\theta = p_i \dd x^i\, , \quad H = p_1 + p_2 x^2\, \quad \lvf = p_i \parder{}{p_i}\, , \quad i\in\{1, 2\}\, .$$
This is precisely the system from Example~\ref{Example:Poisson}. Thus, we have the functions $\lambda_1 = p_1 = -rp$ and $\lambda_2 = p_2 x^2 = rz$, which are homogeneous of degree $1$, in involution, and functionally independent on a dense subset. Projecting them to $M$, we obtain $\bar{\lambda}_1 = p$ and $\bar{\lambda}_2 = -z$, which are functionally independent dissipated quantities in involution.

Moreover, the angle coordinates $\varphi^1 = x^1 = q$ and $\varphi^2 = \log x^2 = \log z$ are $0$-homogeneous, so they project into $M$. With a slight abuse of notation, we will also denote by $\varphi^1$ and $\varphi^2$ to the corresponding functions on $M$. Let $\bar{\lambda} = -\bar{\lambda}_1/\bar{\lambda}_2 = p/z$. In the chart $(U = M \setminus\{z=0\};\varphi^1, \varphi^2, \bar{\lambda})$, the contact Hamiltonian vector field reads $X_h = \partial_{\varphi^1} + \partial_{\varphi^2}$. Moreover, $\bar{\eta} = \dd \varphi^2 -  \bar{\lambda} \dd \varphi^1$ is a contact form on $U$ conformal to $\eta$ (i.e., they generate the same contact distribution $\ker \bar \eta = \ker\eta$), and $X_h$ is the Hamiltonian vector field of $\bar{h}=\bar{\lambda} - 1$ with respect to $\bar{\eta}$.

\section{Conclusions and outlook}

We have proven the impossibility of utilising compatible Jacobi structures to construct completely integrable contact systems (see Corollary~\ref{corollary:main}). This is a consequence of the very restrictive character of bi-Hamiltonian structures. We would like to explore less restrictive geometric structures for studying the integrability contact systems.

In the last years, Tempesta, Tondo and their collaborators have developed a theory on Haantjes operators (a generalisation of Nijenhuis operators). In particular, they have shown that a Hamiltonian system is completely integrable if and only if it has an associated symplectic--Haantjes structure (see \cite[Section 4]{T.T2022}). In subsequent works, we plan to study the applicability of Haantjes structures for the study of homogeneous integrable systems (which are equivalent to integrable contact systems).

On the other hand, in this paper and in our previous article \cite{C.d.L+2023}, we consider functions in involution, which means that their Hamiltonian vector fields (that are assumed to be complete) generate an action of an Abelian Lie group on the manifold. Instead, we could consider a non-Abelian Lie subalgebra on the Poisson algebra of functions, in such a way that the isomorphic\footnote{Recall that $\Cinfty(M)\ni f \mapsto X_f \in \X(M)$ is an (anti-)isomorphism between the Lie algebras $(\Cinfty(M), \{\cdot, \cdot\})$ and $(\X(M), [\cdot, \cdot])$. Whether it is an isomorphism or an anti-isomorphism is just a question of sign criteria in the definition of Poisson bracket.} Lie algebra of Hamiltonian vector fields would generate a non-Abelian action on the manifold. This is related with the notion of contact Lie systems introduced by de Lucas and Rivas in \cite{d.R2023}.

Furthermore, we plan to explore the applicability of our theory on relevant physical examples that can be modelled by homogeneous integrable systems or contact integrable systems.

\begin{credits}
\subsubsection{\ackname} The authors wish to thank Janusz Grabowski for pointing out a logical imprecision in a previous version of this work. They are also grateful to the anonymous referees for their constructive comments. L.~Colombo and M.~de León received financial support from Grant PID2022-137909NB-C21 funded by MICIU/AEI/10.13039/501100011033. M.~de León also acknowledges financial support from Grant CEX2023-001347-S funded by MICIU/AEI/10.13039/501100011033.

\subsubsection{\discintname}
The authors have no competing interests to declare.
\end{credits}


\let\emph\oldemph

\printbibliography

\end{document}